\documentclass{article}
\usepackage[margin=1in]{geometry}
\usepackage{graphicx}
\graphicspath{{BSSSnewVAM_MMS_Figs/}{../BSSSnewVAM_MMS_Figs/}{../../BSSSnewVAM_MMS_Figs/}}
\usepackage{amsmath}
\usepackage{amssymb}
\usepackage{mathtools}
\usepackage{parskip}
\usepackage[hidelinks]{hyperref}
\usepackage{natbib}
\usepackage{subcaption}
\usepackage{color}
\usepackage{ulem}
\usepackage{booktabs}
\usepackage{amsthm}
\newtheorem{lemma}{Lemma}
\newtheorem{proposition}{Proposition}
\newtheorem{remark}{Remark}
\numberwithin{equation}{section}
\numberwithin{figure}{section}
\setlength{\topsep}{1em}
\tolerance=1
\emergencystretch=\maxdimen
\hyphenpenalty=10000
\hbadness=10000
\begin{document}
\title{On the brachistochrone problem for cycling ascents$\,^1$}
\author{
	Len Bos%
	\footnote{
		Universit\`a di Verona, Italy, \texttt{leonardpeter.bos@univr.it}
	}\,,
	Michael A. Slawinski%
	\footnote{
		Memorial University of Newfoundland, Canada, \texttt{mslawins@mac.com}
	}\,,\
	Rapha\"el A. Slawinski%
	\footnote{
		Mount Royal University, Canada, \texttt{rslawinski@mtroyal.ca}
	}\,,\
	Theodore Stanoev%
	\footnote{
		Memorial University of Newfoundland, Canada, \texttt{theodore.stanoev@gmail.com}
	}
}
\date{}
\maketitle
\footnote{An earlier version of this article has been published, under the same title, in \textit{Mathematics and Mechanics of Solids}, 2026, https://doi.org/10.1177/10812865261440184}
\begin{abstract}
VAM ({\it velocit\`a ascensionale media}) is a measurement that quantifies a cyclist's climbing ability.
We show that to minimize the time to attain a given height gain\,---\,which is tantamount to maximizing VAM\,---\,a cyclist should climb as steep a constant-grade hill as possible.
Apart from the power-to-weight ratio, the limit of steepness is imposed by such factors as the efficiency of pedalling, which is related to feasible cadence, maintaining balance, preventing lifting of the front, and skidding of the rear, wheel.
In an appendix, we discuss steepness constraints due to pedalling efficiency.
The article itself is focused on consequences of the power available to the cyclist, which can be viewed as a necessary condition to examine other aspects of climbing strategy.
We show that\,---\,for given start and end points, and for any fixed average-power constraint\,---\,the brachistochrone, which is the trajectory of minimum ascent time, is the straight line connecting these points, covered with a constant speed, which along such a line is equivalent to a constant power.
This is in contrast to the classical solution of a descent brachistochrone under gravity, which is a cycloid along which the speed is not constant.
\end{abstract}
\paragraph{Keywords:} 
VAM, average power, power-to-weight ratio, brachistochrone, phenomenological model
\section{Introduction}
VAM ({\it velocit\`a ascensionale media}) is the average ascent velocity.
It is a measurement of the rate of ascent that quantifies a cyclist's climbing ability.

This article is a direct follow up of \citet{BosEtAl2021}; therein, the authors reach the same conclusions but with an argument where the imposed assumptions overly reduce the candidate ascents, hence, this follow up.
Ascent optimizations are also examined by \citet{BosEtAl2024_DRNA,BosEtAl2025_arXivI,BosEtAl2025_arXivII,BosEtAl2025MMS}, and results of these four articles are used herein.
There is, however, an important distinction between \citet{BosEtAl2021} and \citet{BosEtAl2024_DRNA,BosEtAl2025_arXivI,BosEtAl2025_arXivII,BosEtAl2025MMS}.
In the 2024 and 2025 articles, we optimize the ascent strategy for a given uphill; in other words, the ascent profile is known.
In the 2021 article\,---\,as well as in the present article\,---\,we seek the profile to maximize the ascent rate for a given average power.
In other words, we seek the corresponding brachistochrone.
Notably, brachistochrone problems in relation to cycling are discussed also in other contexts~\citep[e.g.,][]{BCBC}.

We begin this article by examining the expression relating power and speed as a function of steepness.
We prove two lemmas that lead us to conclude that to minimize the time to attain a given height gain a cyclist should ride as steep a climb as possible.
We conclude the article with numerical insights and a discussion, where we comment on certain qualifiers of this conclusion.
The article includes also an appendix where we discuss steepness constraints due to pedalling efficiency.
\section{Formulation}
\label{sec:Method}
Following our previous work~\citep{BosEtAl2024_DRNA,BosEtAl2025_arXivI,BosEtAl2025_arXivII,BosEtAl2025MMS}, we let
\begin{equation}
	\label{eq:P}
   P = \dfrac{\overbrace{m\,\dfrac{{\rm d}V}{{\rm d}t}}^\text{ change in speed}+\overbrace{m\,g\sin\theta}^\text{change in elevation}+\overbrace{{\rm C_{rr}}\underbrace{m\,g\,\cos\theta}_\text{normal force}}^\text{rolling resistance}+\overbrace{\frac{1}{2}\,{\rm C_dA}\,\rho\,V^2}^\text{air resistance}}{\underbrace{1-\lambda}_\text{drivetrain efficiency}}\,V,
\end{equation}
where $P$ is the power, $V$ is the ground speed, $m$~is the mass of the bicycle-cyclist system, $g$~is the acceleration due to gravity, $\theta$ is the slope of the hill,  $\rm C_{rr}$~is the rolling-resistance coefficient, $\rm C_{d}A$~is the air-resistance coefficient, $\rho$~is the air density,  and $\lambda$ is the drivetrain-resistance coefficient.

\citet[Section~3]{BosEtAl2025_arXivI} prove that, given an average-power constraint, the ascent time is minimized if a cyclist maintains a constant ground speed, regardless of the slope.
In particular, if the average power,~$\overline{P}$\,, is constrained to a given value,~$P_0$, over the ascent time, $T$, namely,
\begin{equation}
\label{eq:L1}
\overline{P}=\frac{1}{T}\int\limits_0^TP\,{\rm d}t=P_0\,,
\end{equation}
then the optimal constant ground speed,~$V$, can be found by solving
\begin{equation}
\label{eq:OptV}
m\,g\,V\frac{H+{\rm C_{rr}}D}{L}+\frac{1}{2}\,{\rm C_dA}\,\rho\,V^3=(1-\lambda)\,P_0\,,
\end{equation}
where $H$ is the height gain, $D$ is the horizontal distance and $L$ is the arclength of the ascent curve \citep[Appendix~G]{BosEtAl2025_arXivI}.

The work done in reaching any speed\,---\,before starting the ascent\,---\,is not included in the average power for the ascent.
In any case, such an effect is negligible for longer ascents, which are required to quantify the cyclist's climbing ability using VAM.
\begin{remark}
In accordance with equation~(\ref{eq:OptV}), for fixed $H$ and $D$, the optimal constant value of $V$ is a function of $L$ alone.
\end{remark}
The ascent time is
\begin{equation}
\label{eq:T}
T=\frac{L}{V}.
\end{equation}
Given $H$ and $D$, the optimal speed,~$V=V(L)$, is defined implicitly by equation~(\ref{eq:OptV}), as a function of $L$; hence, equation~(\ref{eq:T}) expresses implicitly $T$ as a function of $L$.

Let us state the following proposition, whose proof follows from Lemma~\ref{lem:dTdL}.
\begin{proposition}
\label{prop:Straight}
For given start and end points, the curve of the minimum ascent time\,---\,for any fixed average-power constraint\,---\,is the straight line connecting these points.
\end{proposition}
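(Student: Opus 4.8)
The plan is to reduce the variational problem over all admissible ascent curves to a single-variable minimization in the arclength~$L$. Fixing the start and end points fixes both the height gain~$H$ and the horizontal distance~$D$; by the Remark following equation~(\ref{eq:OptV}), the optimal constant speed then depends on the chosen curve only through its arclength, $V=V(L)$, so that by equation~(\ref{eq:T}) the ascent time $T=L/V(L)$ is likewise a function of~$L$ alone. It therefore suffices to determine the value of~$L$ that minimizes $T(L)$ and to exhibit an admissible curve attaining it.

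Two ingredients complete the argument. The first is purely geometric: among all curves joining the two fixed endpoints, the arclength obeys $L\geq\sqrt{D^2+H^2}$, with equality precisely for the straight segment, since a line is the shortest path between two points. The second is the monotonicity supplied by Lemma~\ref{lem:dTdL}, namely ${\rm d}T/{\rm d}L>0$. Granting these, $T(L)$ is strictly increasing, so it is minimized at the smallest attainable arclength $L=\sqrt{D^2+H^2}$, which is realized only by the straight line; this is exactly the assertion of the Proposition.

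The main obstacle is the monotonicity claim, which I would establish by implicit differentiation of equation~(\ref{eq:OptV}). Writing it as $a\,V/L+b\,V^3=c$ with $a=m\,g\,(H+{\rm C_{rr}}D)>0$, $b=\tfrac12\,{\rm C_dA}\,\rho>0$ and $c=(1-\lambda)\,P_0$, differentiation in~$L$ yields $V'(L)=a\,V/\bigl(L(a+3\,b\,V^2L)\bigr)$, whence $L\,V'(L)=a\,V/(a+3\,b\,V^2L)$ and
\begin{equation*}
\frac{{\rm d}T}{{\rm d}L}=\frac{V-L\,V'(L)}{V^2}=\frac{1}{V}\cdot\frac{3\,b\,V^2L}{a+3\,b\,V^2L}>0.
\end{equation*}
The point to emphasize is that this sign is not obvious a priori: flattening a route (increasing~$L$ with $H$ and $D$ fixed) lowers the coefficient $(H+{\rm C_{rr}}D)/L$ and thus raises the optimal speed, so $V'(L)>0$ and $T=L/V$ is a ratio of two increasing quantities. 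The computation above shows that $V$ grows strictly sublinearly in~$L$ (its elasticity ${\rm d}\ln V/{\rm d}\ln L$ stays below~$1$), which is precisely what forces $T$ to increase. I would also note that the positivity of~$a$ and~$b$\,---\,and hence the whole argument\,---\,uses only that the physical parameters are positive and that the power constraint admits a positive solution~$V$.
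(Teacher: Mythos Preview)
Your proposal is correct and follows essentially the same route as the paper: reduce to a one-variable problem in $L$ via the Remark, prove ${\rm d}T/{\rm d}L>0$ by implicitly differentiating $a\,V/L+b\,V^3=c$ to obtain $V'(L)=a\,V/\bigl(L(a+3\,b\,V^2L)\bigr)$ and hence $V-L\,V'=3\,b\,L\,V^3/(a+3\,b\,L\,V^2)>0$, and then invoke the elementary fact that the straight segment minimizes arclength. Your added remark that $V'(L)>0$ makes $T=L/V$ a ratio of two increasing quantities, so the sign of ${\rm d}T/{\rm d}L$ is genuinely at issue, is a nice gloss but does not alter the argument.
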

\begin{lemma}
\label{lem:dTdL}
For $L>0$\,,
\begin{equation*}
\frac{{\rm d}T}{{\rm d}L}>0\,.	
\end{equation*}
\end{lemma}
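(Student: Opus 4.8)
The plan is to treat $V=V(L)$ as the function defined implicitly by~(\ref{eq:OptV}), differentiate $T=L/V$ from~(\ref{eq:T}) with respect to $L$, and reduce the claimed inequality to a manifestly positive quantity. To lighten the notation I would first collect the positive constants in~(\ref{eq:OptV}) by setting $a:=m\,g\,(H+{\rm C_{rr}}D)>0$, $b:=\tfrac12\,{\rm C_dA}\,\rho>0$ and $c:=(1-\lambda)\,P_0>0$, so that~(\ref{eq:OptV}) reads
\begin{equation*}
\frac{a\,V}{L}+b\,V^3=c\,.
\end{equation*}
For each fixed $L>0$ the left-hand side is strictly increasing in $V$, rising from $0$ to $\infty$, so there is a unique positive root $V(L)$; moreover its partial derivative in $V$, namely $a/L+3\,b\,V^2$, is positive, so the implicit function theorem guarantees that $V(L)$ is differentiable. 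This disposes of well-definedness and differentiability.

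Next I would record the elementary reduction. From $T=L/V$ we get $\dfrac{{\rm d}T}{{\rm d}L}=\dfrac{V-L\,V'}{V^2}$, writing $V'={\rm d}V/{\rm d}L$, so that, since $V^2>0$, the claim $\dfrac{{\rm d}T}{{\rm d}L}>0$ is equivalent to the single inequality $V'<V/L$. Thus the whole lemma hinges on bounding the sensitivity of the optimal speed to the arclength.

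To obtain $V'$ I would differentiate $a\,V/L+b\,V^3=c$ with respect to $L$, which yields
\begin{equation*}
\left(\frac{a}{L}+3\,b\,V^2\right)V'=\frac{a\,V}{L^2}\,,
\qquad\text{hence}\qquad
V'=\frac{a\,V}{a\,L+3\,b\,V^2L^2}\,.
\end{equation*}
Substituting into $V'<V/L$, cancelling $V>0$ and clearing the two positive denominators collapses the inequality to $a\,L<a\,L+3\,b\,V^2L^2$, i.e.\ $0<3\,b\,V^2L^2$, which holds for every $L>0$. Equivalently, one can display the conclusion transparently by writing $V'=\dfrac{V}{L}\cdot\dfrac{a\,L}{a\,L+3\,b\,V^2L^2}$, where the second factor is strictly less than $1$.

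I do not expect a genuine obstacle here: once $V(L)$ is known to be well defined and differentiable, the computation is short and the final inequality is trivially positive. The only point that merits a careful sentence is the existence/uniqueness/differentiability of $V(L)$, handled by the monotonicity-plus-implicit-function-theorem remark above; everything else is routine implicit differentiation, and the positivity of the physical constants $a$, $b$ together with $V,L>0$ drives the sign.
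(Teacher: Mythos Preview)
Your proof is correct and follows essentially the same route as the paper: introduce the constants $a,b,c$, compute $V'$ by implicit differentiation of $aV/L+bV^3=c$, and reduce ${\rm d}T/{\rm d}L>0$ to the manifestly positive quantity $3bL V^2$ (the paper writes $V-LV'=\dfrac{3bLV^3}{a+3bLV^2}$, which is your inequality after clearing denominators). Your added sentence on existence, uniqueness and differentiability of $V(L)$ via monotonicity and the implicit function theorem is a welcome refinement that the paper leaves implicit.
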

\begin{proof} 
Since
\begin{equation*}
\frac{{\rm d}T}{{\rm d}L}=\frac{V-L\,V'(L)}{V^2}\,,
\end{equation*}
where $V'(L) := {\rm d}V/{\rm d}L$\,, we must show that
\begin{equation*}
V-L\,V'(L)>0\,,\qquad {\rm for}\,\,L>0\,.	
\end{equation*}
Let us write equation~(\ref{eq:OptV}) as
\begin{equation*}
 a \frac{V}{L}+b\,V^3=c\,,
 \end{equation*}
where
\begin{equation*}
a:=m\,g\,(H+{\rm C_{rr}}D)>0,\qquad b:=\frac{1}{2}{\rm C_dA}\,\rho>0,\qquad c:=(1-\lambda)\,P_0>0\,.
\end{equation*}
Differentiating implicitly with respect to $L$\,, we obtain
\begin{align*}
a\left(\frac{L\,V'-V}{L^2} \right) + 3\,b\,V^2V'&=0\\[5pt]
\implies a\,(L\,V'-V)+3\,b\,L^2V^2V'&=0\\[5pt]
\implies\,\,\, V'(a\,L+3\,b\,L^2V^2)-a\,V&=0\,,
\end{align*}
and, hence,
\begin{equation*}
V'(L)=\frac{a\,V}{a\,L+3\,b\,L^2V^2}>0\,.
\end{equation*}

Substituting this expression into $V-LV'(L)$\,, we obtain
\begin{align*}
V-LV'(L)&=V-\frac{a\,V}{a+3\,b\,L\,V^2}\\[5pt]
&=\frac{V(a+3\,b\,L\,V^2)-a\,V}{a+3\,b\,L\,V^2}\\[5pt]
&=\frac{3\,b\,L\,V^3}{a+3\,b\,L\,V^2}>0\,,
\end{align*}
as claimed.
\end{proof}
In other words, the time of ascent is an increasing function of the length of the ascent path.
Since ${\rm d}T/{\rm d}L$ is strictly greater than zero, $T$ is a strictly increasing function of~$L$ and, therefore, the value of~$L$ that minimizes~$T$ is unique.
Hence, it follows that the minimum time is attained for the curve of minimum length, which is the straight line connecting the start and end points of the ascent, as stated in Proposition~\ref{prop:Straight}.
\begin{remark}
The condition ${\rm d}T/{\rm d}L>0$ says, in words, that the longer the length of the ascent the greater the time required, and so the most direct ascent\,---\,a straight line\,---\,also requires the least amount of time.
This may appear, at first sight, to be obvious.
However, even for motion in a horizontal plane, a straight line might not result in a shortest time, as would be the case with changing rolling friction due to variable road conditions, which is akin to ray bending in accordance with Fermat's principle in layered media.
For motion in a vertical plane, the solution of the classical brachistochrone problem is a cycloid and not a straight line.
These examples suggest that, for an ascent, there are subtleties that need to be explored and thus that Lemma~\ref{lem:dTdL} is a nontrivial result.
\end{remark}
\begin{remark}
Following expressions proven in Lemma~\ref{lem:dTdL}, it follows that
\begin{equation*}
\frac{{\rm d}T}{{\rm d}L}=\frac{V-L\,V'(L)}{V^2}=\frac{3\,b\,L\,V}{a+3\,b\,L\,V^2}\,\frac{a\,L}{a\,L}=\frac{3\,b\,L^2}{a}\frac{a\,V}{a\,L+3\,b\,L^2V^2}=\frac{3\,b\,L^2}{a}\frac{{\rm d}V}{{\rm d}L}\,,
\end{equation*}
which is an interesting relation.
\end{remark}
We now consider the question of finding, among straight lines with a given height gain,~$H$, the one that gives the shortest ascent time.
The answer, as we show, is\,---\,in the limit\,---\,a vertical climb.
\begin{lemma}
\label{lem:Steep}
For $L=\sqrt{H^2+D^2}$, with a given height gain,~$H$, and a varying horizontal distance,~$D>0$\,,
 \begin{equation*}
 \frac{{\rm d}T}{{\rm d}D}>0\,.	
 \end{equation*}
\end{lemma}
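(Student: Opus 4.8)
The plan is to regard $T=L/V$ as a function of $D$ alone, keeping in mind that now\,---\,unlike in Lemma~\ref{lem:dTdL}, where $D$ was held fixed\,---\,both $L=\sqrt{H^2+D^2}$ and the coefficient $a=m\,g\,(H+{\rm C_{rr}}D)$ vary with $D$, while $b$ and $c$ stay constant. Writing $L'={\rm d}L/{\rm d}D=D/L$ and $V'={\rm d}V/{\rm d}D$, the quotient rule gives ${\rm d}T/{\rm d}D=(L'\,V-L\,V')/V^2$, so it suffices to prove $L'\,V-L\,V'>0$, that is, $(D/L)\,V>L\,V'$.

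Next I would obtain $V'$ by differentiating the optimality relation $a\,V/L+b\,V^3=c$ implicitly with respect to $D$, this time carrying the $D$-dependence of both $a$ (with ${\rm d}a/{\rm d}D=m\,g\,{\rm C_{rr}}$) and $L$. Solving for $V'$ yields
\[
V'=\frac{V\,(a\,L'-a'\,L)}{a\,L+3\,b\,V^2L^2}\,,
\]
and, using $L^2=H^2+D^2$ to simplify the numerator, one finds the compact form $a\,L'-a'\,L=\dfrac{m\,g\,H}{L}\,(D-{\rm C_{rr}}H)$.

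Substituting this expression into $L'\,V-L\,V'$, factoring out $V>0$, and clearing the positive denominator $a\,L+3\,b\,V^2L^2$ together with $L>0$ reduces the claim to showing that $D\,(a\,L+3\,b\,V^2L^2)-L\,m\,g\,H\,(D-{\rm C_{rr}}H)>0$. The key subtlety\,---\,and the reason this lemma is genuinely different from Lemma~\ref{lem:dTdL}\,---\,is that $V'$ is no longer sign-definite: it is in fact negative whenever $D<{\rm C_{rr}}H$, so the simple monotonicity argument used before does not carry over. I expect the crux to be the cancellation
\[
D\,a-m\,g\,H\,(D-{\rm C_{rr}}H)=m\,g\,{\rm C_{rr}}\,(H^2+D^2)=m\,g\,{\rm C_{rr}}L^2\,,
\]
after which the left-hand side collapses to $L\,(m\,g\,{\rm C_{rr}}L^2+3\,b\,D\,V^2L)$, which is manifestly positive term by term. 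Once the $H^2+D^2$ contributions are grouped via $L^2$ the positivity is immediate, establishing ${\rm d}T/{\rm d}D>0$.
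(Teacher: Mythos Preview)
Your proof is correct and follows essentially the same route as the paper's own argument: implicit differentiation of the optimality relation $a\,V/L+b\,V^3=c$ with the $D$-dependence of both $a$ and $L$ accounted for, leading to exactly the same key cancellation $D\,a-m\,g\,H\,(D-{\rm C_{rr}}H)=m\,g\,{\rm C_{rr}}L^2$ and the manifestly positive combination $m\,g\,{\rm C_{rr}}L^2+3\,b\,D\,V^2L$. The only cosmetic difference is that the paper tracks $D\,V-L^2V'$ (the numerator of ${\rm d}T/{\rm d}D$ after using ${\rm d}L/{\rm d}D=D/L$), which is simply $L$ times your quantity $L'\,V-L\,V'$.
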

\begin{proof}
Given $H$ and $L=\sqrt{H^2+D^2}$, equation~(\ref{eq:OptV}) implicitly states $V$ as a function of $D$.
Differentiating equation~(\ref{eq:T}) with respect to $D$ we obtain
\begin{equation}
\label{eq:dTdD}
\frac{{\rm d}T}{{\rm d}D}=\frac{D\,V-L^2V'}{L\,V^2},
\end{equation}
where $V' := {\rm d}V/{\rm d}D$ and we use the fact that ${\rm d}L/{\rm d}D=D/L$\,.
Letting
 \begin{equation*}
 a(D):=m\,g\,(H+{\rm C_{rr}}\,D)\,,\qquad b:=\frac{1}{2}{\rm C_dA}\,\rho>0\,,\qquad c:=(1-\lambda)\,P_0\,,
 \end{equation*}
we write equation~(\ref{eq:OptV}) as
\begin{equation*}
a(D)\frac{V}{L}+b\,V^3=c\,.
\end{equation*}
Differentiating implicitly with respect to $D$, we obtain
\begin{align*}
&a'(D)\frac{V}{L}+a(D)\left(\frac{V'(D)}{L}-\frac{V}{L^2}L'(D)\right)+3\,b\,V^2V'(D)\\[5pt]
&=a'(D)\frac{V}{L}+a(D)\left( \frac{V'(D)}{L}-\frac{V}{L^2}\frac{D}{L}\right)+3\,b\,V^2V'(D)\\[5pt]
&=V'(D)\left(\frac{a(D)}{L}+3\,b\,V^2\right)+\frac{a'(D)V}{L}-\frac{a(D)VD}{L^3}=0\,;
\end{align*}
hence,
\begin{equation*}
V'(D)\left(\frac{a(D)+3\,b\,V^2L}{L}\right)=\frac{V}{L^3}\left(-a'(D)\,L^2+a(D)\,D\right).	
\end{equation*}
But
\begin{align*}
-a'(D)\,L^2+a(D)\,D
&=-m\,g\,{\rm C_{rr}}\,(H^2+D^2)+m\,g\,(H+{\rm C_{rr}}\,D)\,D
=-m\,g\,{\rm C_{rr}}\,H^2+m\,g\,H\,D\\[5pt]
&=m\,g\,H\,(D-{\rm C_{rr}}\,H)\,.
\end{align*}
Consequently,
\begin{equation*}
V'(D)=\frac{V}{L^2}\frac{m\,g\,H\,(D-{\rm C_{rr}}\,H)}{a(D)+3\,b\,V^2L},
\end{equation*}
so that
\begin{align*}
DV-L^2V'(D)&=DV-V\frac{m\,g\,H(D-{\rm C_{rr}}H)}{a(D)+3\,b\,V^2L}\\[5pt]
&=\frac{V}{a(D)+3\,b\,V^2L}\left(D(a(D)+3\,b\,V^2L)-m\,g\,H(D-{\rm C_{rr}}H)\right)\\[5pt]
&=\frac{V}{a(D)+3\,b\,V^2L}\left(D(m\,g\,(H+{\rm C_{rr}}D))+3\,b\,V^2LD-m\,g\,HD+m\,g\,{\rm C_{rr}}H^2\right)\\[5pt]
&=\frac{V}{a(D)+3\,b\,V^2L}\left(m\,g\,{\rm C_{rr}}(D^2+H^2)+3\,b\,V^2LD\right)\\[5pt]
&=\frac{V}{a(D)+3\,b\,V^2L}\left(m\,g\,{\rm C_{rr}}L^2+3\,b\,V^2LD\right)>0
\end{align*}
and thus\,---\,by equation~(\ref{eq:dTdD})\,---\,${\rm d}T/{\rm d}D>0$\,.
\end{proof}
Consequently\,---\,according to physical considerations\,---\,a cyclist wishing to minimize the time to attain a given height gain should ride as steep a constant-grade climb as possible.

However, even though the physics suggests that the steeper the gradient the better, in the next section, we impose certain constraints and, in Appendix~\ref{app:L4}, we discuss a metabolic cost of torque fluctuations beyond which pedalling inefficiency counteracts the gains increasing the steepness of the ascent.
\section{Numerical insights}
\label{sec:NumIns}
\subsection{Speed, its vertical component and ascent time as functions of  slope}
In this section, we present numerical insights into the ground speed,~$V$, its vertical component and the ascent time, as a function of the slope,~$\theta$. 
To do so, we return to model~(\ref{eq:P}) with $P=300\,{\rm W}$, $m=68\,{\rm kg}$, $g=9.81\,{\rm m/s}^2$, $\rho=1.2\,{\rm kg/m}^3$, ${\rm C_dA}=0.3\,{\rm m}^2$, ${\rm C_{rr}}=0.005$, $\lambda=0.02$, and we let $\theta\in(5^\circ,20^\circ)$, which corresponds to slopes in the range $(8.7489\%,36.3970\%)$.%
\footnote{In cycling, as well as in road engineering, it is common to measure slope in percentages, $[\,\%\,]=\tan\theta\times 100\%$\,; thus, $45^{\circ}$ corresponds to $100\%$, which is tantamount to rise over run equal to unity.}
To solve for speed, we use the fact that expression~(\ref{eq:P}) becomes a cubic equation in $V$, with only one positive real solution~\citep[][Section~4.2]{BosEtAl2025_arXivI}.

The ground speed is shown in Figure~\ref{fig:V}.
As expected\,---\,given the power constraint\,---\,it decreases with the increase of steepness.
\begin{figure}[h]
	\centering
	\includegraphics[width=0.75\textwidth]{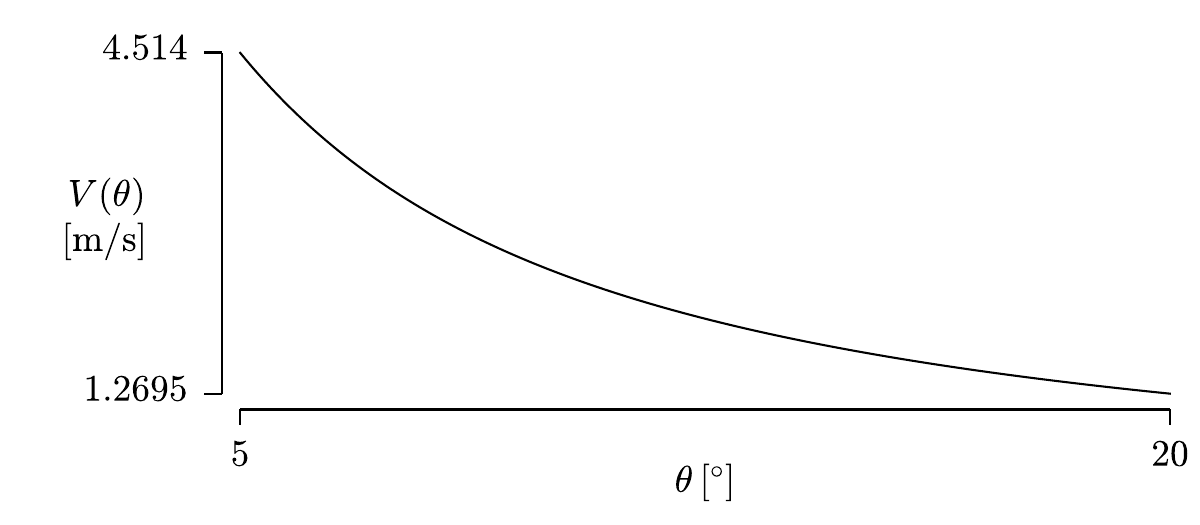}
	\caption{ $V(\theta)$: Optimal constant ground speed as a function of slope}
	\label{fig:V}
\end{figure}

The vertical component of that speed is $V(\theta)\sin\theta$, which is shown in Figure~\ref{fig:Vsin}.
In accordance with the results in Section~\ref{sec:Method}, it increases with the increase of steepness.
\begin{figure}[h]
	\centering
	\includegraphics[width=0.75\textwidth]{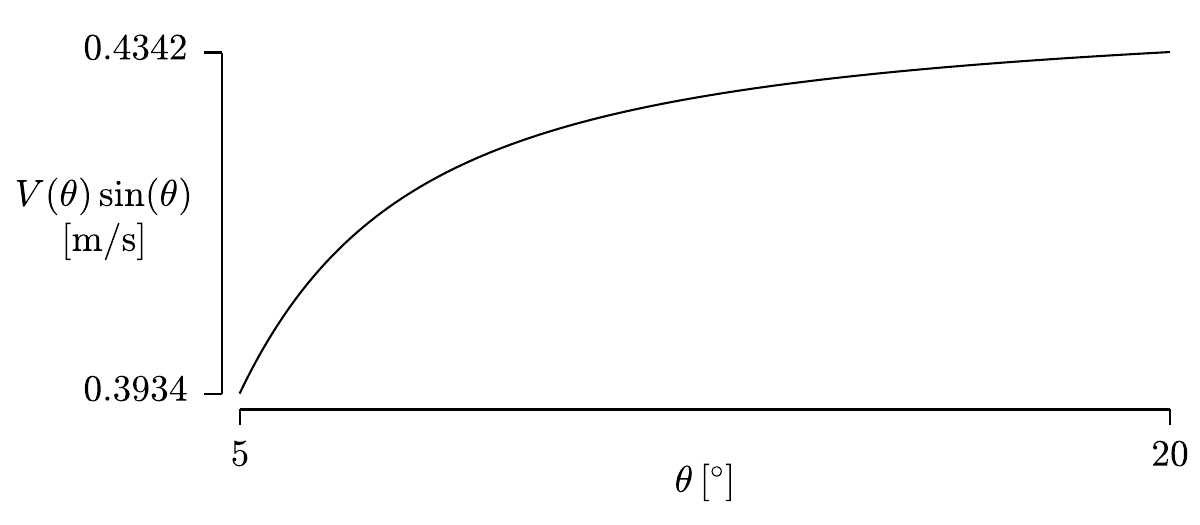}
	\caption{$V(\theta)\sin\theta$: Vertical component of speed as a function of slope}
	\label{fig:Vsin}
\end{figure}

The ascent time for the elevation gain of~$1\,000\,\rm{m}$ is shown in Figure~\ref{fig:T}.
Again in accordance with Section~\ref{sec:Method}, it decreases with steepness.
\begin{figure}
	\centering
	\includegraphics[width=0.75\textwidth]{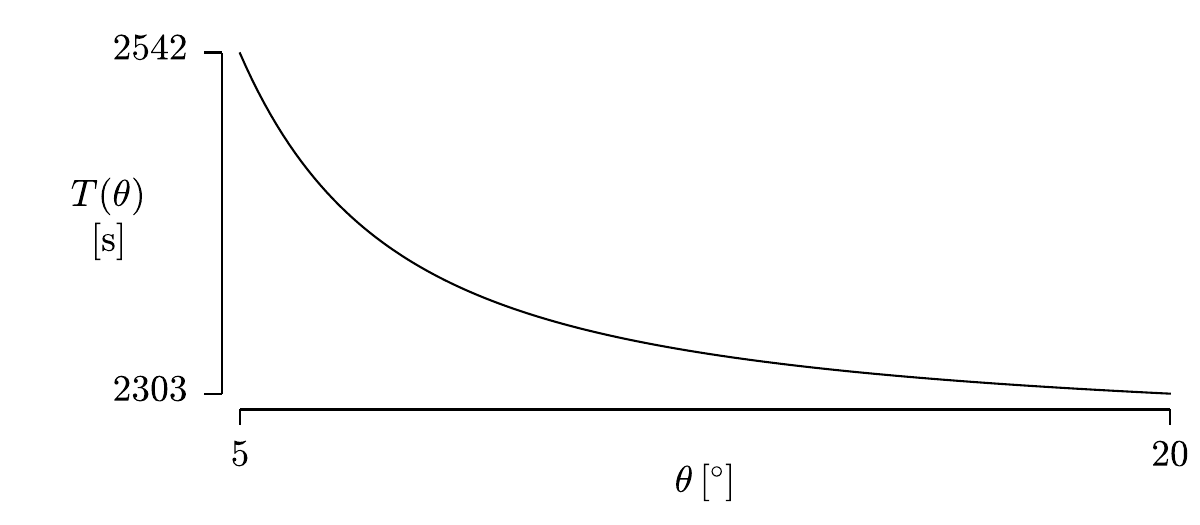}
	\caption{$T(\theta)$: Ascent time for the elevation gain of~$1\,000\,\rm{m}$ as a function of slope}
	\label{fig:T}
\end{figure}
\subsection{Relations between power, mass and optimal slopes}
\label{sub:OptimalSlopes}
In this section, we present numerical insights into relations between power, mass and optimal slopes.
Figures~\ref{fig:V}--\ref{fig:T} examine ascents under the fixed power constraint,~$P=300\,{\rm W}$, illustrating, respectively, ground speed, its vertical component, and ascent time as a function of slope.
Figure~\ref{fig:contour} examines ascents under the constraint of constant ground speed,~$V=1.5\,{\rm m/s}$.
For a given mass,~$m$, the intersection of a vertical line with a contour line representing a slope corresponds to the power,~$P$, required to achieve the ascent along that slope with~$V=1.5\,{\rm m/s}$.
As expected, an increase in mass or steepness requires greater power to maintain the constant-speed ascent.
\begin{figure}
	\centering
	\includegraphics[width=0.75\textwidth]{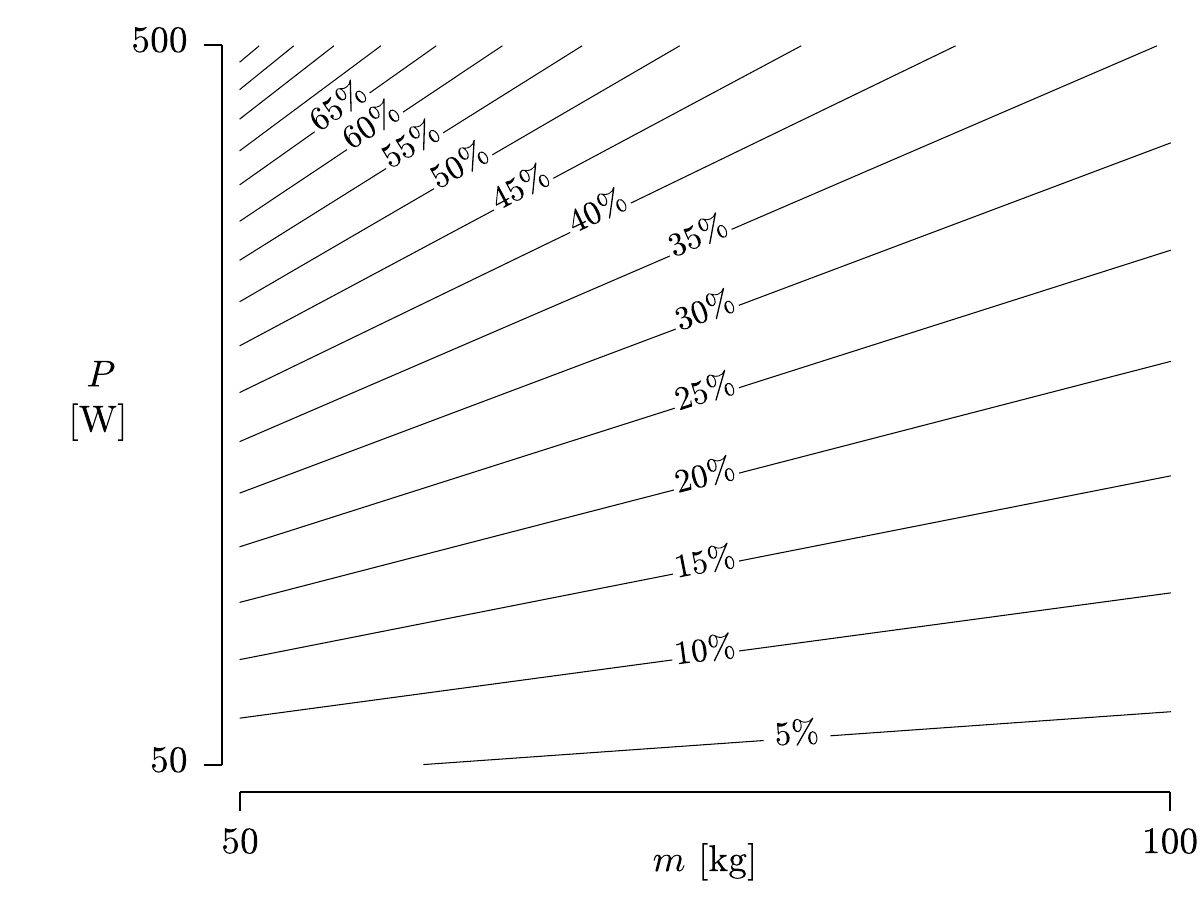}
	\caption{
	Contour lines showing maximum slope climbable with~$V=1.5\,{\rm m/s}$, for a given mass,~$m$, and power,~$P$}
\label{fig:contour}
\end{figure}

We assume $V=1.5\,{\rm m/s}$ to be the lower bound for efficiency of pedalling in uphill cycling.
For available gear ratios, it corresponds to a cadence of about 50~RPM.
Lower speeds would result in balance issues and lower cadences in an exceeding high pedal torque for each revolution.
VAM as a function of cadence, gear ratio and force is discussed by~\citet[Section~3.3]{BosEtAl2021}.

We also need to consider pragmatic limitations in handling a bicycle, such as lifting of the front wheel and skidding of the rear wheel due to steepness of a climb.
From empirical studies we infer that the steepest ascents achievable in cycling are about $30\%$; hence, the values in the upper left-hand corner of Figure~\ref{fig:contour} are unrealistic.
To gain an insight into consequences of this limitation, let us consider Figure~\ref{fig:Pmvsgrade}, where we consider a common metric in competitive cycling, namely, the power-to-weight ratio.
Riders whose power-to-weight ratio\,---\,for a given time interval\,---\,is greater than about $5\,{\rm W/kg}$ should not strive to maximize their VAM by choosing a steeper uphill but by remaining at about $30\%$ and increasing the ground speed,~$V$.
\begin{figure}
	\centering
	\includegraphics[width=0.75\textwidth]{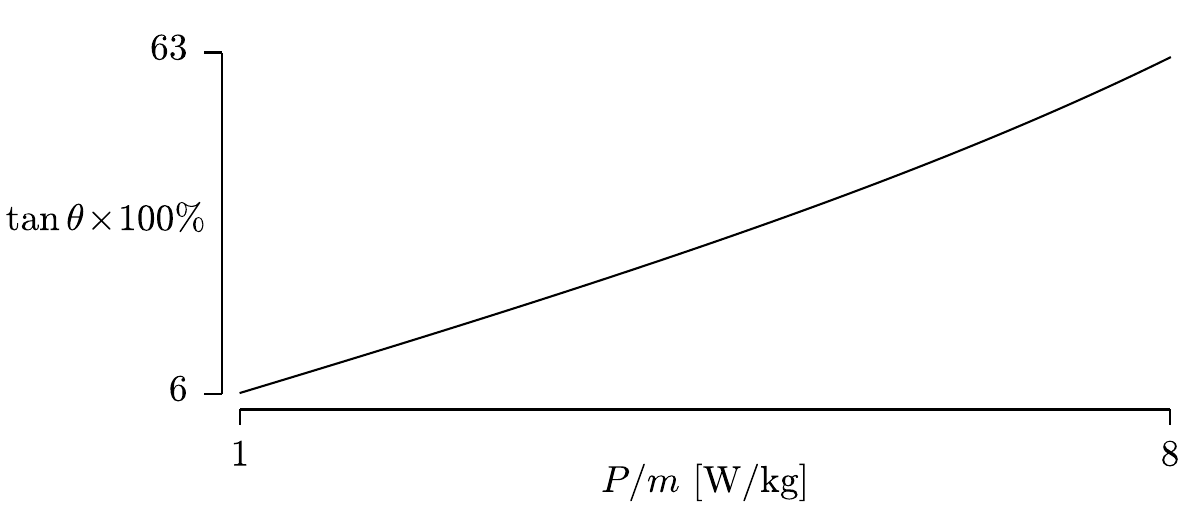}
	\caption{%
	Slope as a function of power-to-weight ratio for $V = 1.5\,\rm m/s$}
	\label{fig:Pmvsgrade}
\end{figure}

For instance, letting $P= 600\,{\rm W}$ and $m = 75\,{\rm kg}$, which corresponds to $P/m=8\,{\rm W/kg}$\,, and considering the slope of~$30\%$, as well as $\rho = 1.2\,{\rm kg/m^3}$, ${\rm CdA} = 0.3\,{\rm m^2}$, ${\rm Crr} = 0.005$, $\lambda = 0.02$, we find\,---\,according to model~(\ref{eq:P})\,---\,the ground speed $V = 2.7188\,{\rm m/s} = 9.7877\,{\rm km/h}$.
Hence, for such a rider, the optimal strategy to maximize VAM is the slope of~$30\%$ and $V = 2.7188\,{\rm m/s}$, not the slope of~$62\%$ and $V=1.5\,{\rm m/s}$, even though, as expected in view of Lemmas~\ref{lem:dTdL} and~\ref{lem:Steep}\,---\,to gain $1\,000\,{\rm m}$\,---\,$T(30\%)=1280\,{\rm s}>T(62\%)=1262\,{\rm s}$\,; the latter, however, is unrealistic.
\section{Discussion and conclusion}
\label{sec:DisCon}
Expression~(\ref{eq:L1}), which is the constraint of the optimization presented in this article, is the arithmetic mean,~$L^1$.
For a constant power, this value is identical to all generalized means,
\begin{equation*}
L^p = \left( \frac{1}{T}\int\limits_0^T |P(t)|^p\,{\rm d}t \right)^{\tfrac{1}{p}}\!\!\!, \qquad p \in [\,1, +\infty)\,;
\end{equation*}
indeed, constancy is the only condition under which all $L^p$ means are the same.
Among these, we have the quadratic mean,~$L^2$, commonly referred to as the root-mean-square, and the quartic mean,~$L^4$, known in cycling literature as Normalized Power \citep{HunterCoggan}: a nonlinear scaling of metabolic cost.
Normalized Power is a common cycling metric to account for effort variability in a manner that is not possible for average power.
$L^1$ is a linear measure; $L^4$, on the other hand, is  sensitive to power changes.
A ride alternating for equal amounts of time between, say, $200\,\rm W$ and $400\,\rm W$ still results in $L^1 = 300\,\rm W$ but in an $L^4 = 341.5\,\rm W$, quantifying the extra effort imposed by surges.

The equality, $L^1 = L^4$, which is satisfied along the brachistochrone, implies a perfectly steady effort, which can be viewed as the theoretical optimum for minimizing physiological strain empirically represented by~$L^4$, for a given mechanical output,~$L^1$.
Let us note, however, that\,---\,for a power that is a piecewise constant function\,---\,$L^p$ norms depend only on the time spent at a given power; in other words, $L^4$ remains invariant whether the power shifts once or many times.
Cyclist's fatigue, though, is likely to be different.
A choice of optimal steepness constrained by a physiological strain, represented by~$L^4$, is discussed in Appendix~\ref{app:L4}.

Under mechanical considerations, as stated by Lemmas~\ref{lem:dTdL} and \ref{lem:Steep}, respectively, the optimal ascent to maximize the VAM\,---\,for a given average power\,---\,is a constant-grade hill with the highest grade.
As proven by \citet[Section~3]{BosEtAl2025_arXivI}\,---\,for a given average power\,---\,the ascent time is minimized by a constant ground speed, regardless of slope's steepness and shape.
Herein, constancy of grade and speed imply a constancy of instantaneous power.
Constant speed combined with constant instantaneous power is the essence of this optimization.
A~constant speed ensures the shortest ascent time; the work done by the cyclist is spent entirely on overcoming external forces, with no work spent on fluctuating kinetic energy of the bicycle-cyclist system.
A~constant instantaneous power renders an attempt feasible; maintaining constant speed along variable steepness might require excessive power along steepest portions.

In other words, for given start and end points as well as a fixed average-power constraint, the brachistochrone is the straight line connecting these points.
VAM is maximized by the shortest distance covered with a constant speed corresponding to that average power, which is also the instantaneous power.
This is in contrast to the classical solution of a descent brachistochrone under gravity, which is a cycloid along which the speed is not constant.
Herein, the shortness of the straight line, which is tantamount to its steepness, is subject to the power-to-weight ratio and to pragmatic issues of pedalling efficiency and bicycle handling.
\bibliographystyle{apa}
\bibliography{BSSSnewVAM_arXiv.bib}
\begin{appendix}
\section{Steepness constraints due to pedalling efficiency}
\label{app:L4}
\setcounter{equation}{0}
As shown is Section~\ref{sec:Method}, model~(\ref{eq:P}), whose considerations are limited to physical phenomena, promotes the steepest ascent possible.
There are, however, other constraints, as stated in Section~\ref{sub:OptimalSlopes}.
An important physiological constraint is efficiency of pedalling, which we discuss in this appendix.

As the ascent steepens, pedalling becomes less smooth.
The average power per revolution begins to differ significantly from the maximum power that corresponds to a downstroke.
Following our discussion in Section~\ref{sec:DisCon}, we introduce a nonlinear metabolic cost,~$P_{\rm met}$, which we model by the~$L^4$ norm of the instantaneous power,~$P(\phi)$, where $\phi\in[\,0,2\pi\,)$.
The instantaneous power generated by the cyclist is the average power,~$P_0$, and its fluctuations of amplitude,~$A$, which depends on the steepness angle,~$\theta$\,.
We choose to model it by
\begin{equation}
\label{eq:Pphi}
P(\phi) = P_0\left(1+ A(\theta)\,\sin(2\phi)\right)\,,
\end{equation}
where $|A(\theta)|<1$\,.
During a single revolution, there are two maxima, which correspond to the downstroke of either pedal.

In a manner analogous to expression~(\ref{eq:L1}), we set
\begin{equation*}
P_0 = \,\frac{1}{2\pi}\int\limits_{0}^{2\pi}\!P(\phi)\,{\rm d}\phi\,.
\end{equation*}
To ensure that $P_0$ in expression~(\ref{eq:Pphi}) is the average power over one full revolution, as opposed to being an arbitrary scaling factor, we use this expression as the integrand to get
\begin{equation*}
\frac{1}{2\pi}\int\limits_{0}^{2\pi}\!P_0\left(1+ A(\theta)\,\sin(2\phi)\right)\,{\rm d}\phi\,
=\frac{P_0}{2\pi}\,\left[\phi-\frac{A}{2}\,\cos(2\phi)\right]^{2\pi}_0
=P_0\,,
\end{equation*}
as required.

Invoking the $L^4$ average as a metabolic cost,
\begin{equation*}
P_{\rm met} = \left[\,\frac{1}{2\pi}\int\limits_{0}^{2\pi}\!P^4(\phi)\,{\rm d}\phi\right]^{1/4}\,,
\end{equation*}
and raising the integrand\,---\,given by expression~(\ref{eq:Pphi})\,---\,to the fourth power, we get
\begin{equation*}
P_{\rm met} = P_0\,\left[\,\frac{1}{2\pi}\int\limits_{0}^{2\pi}\left(1 + 4\,A \sin(2\phi) + 6\,A^2 \sin^2(2\phi) + 4\,A^3 \sin^3(2\phi) + A^4 \sin^4(2\phi)\right)\,{\rm d}\phi\,\right]^{1/4}.
\end{equation*}
Upon integration, the odd-power sine terms vanish.
Using $\int_{0}^{2\pi}\sin^2(2\phi)\,{\rm d}\phi = \pi$ and $\int_{0}^{2\pi} \sin^4(2\phi)\,{\rm d}\phi=3\pi/4$, we simplify the expression to write
\begin{equation}
\label{eq:MetCost}
P_{\rm met} = P_0\left(\,1 + 3\,A^2(\theta) + \frac{3}{8} A^4(\theta)\,\right)^{1/4}.
\end{equation}
The ratio,
\begin{equation*}
\frac{P_{\rm met}}{P_0 } = \left(\,1 + 3\,A^2(\theta) + \frac{3}{8} A^4(\theta)\,\right)^{1/4}
\end{equation*}
is given by this formula in terms of~$A(\theta)$.
Conversely, specifying the maximum allowable ratio, we can find the corresponding maximum~$A(\theta)$,
\begin{equation}
\label{eq:P/P04}
\left(\frac{P_{\rm met}}{P_0 }\right)^4 = 1 + 3\,A^2_{\rm max} + \frac{3}{8}\,A^4_{\rm max}\,.
\end{equation}
To obtain the corresponding steepness, we let $A(\theta)$ be proportional to the vertical component of gravitational force, $A(\theta)=\kappa_0\,m\,g\,\sin(\theta)$\,, to get
\begin{equation}
\label{eq:SinMax}
\sin(\theta_{\rm max})=\frac{A_{\rm max}}{\kappa_0\,m\,g}\,,
\end{equation}
where $A_{\rm max}$ is the solution of equation~(\ref{eq:P/P04}).
The result depends only on the ratio, $P_{\rm met}/P_0$, not on the values themselves.

For a numerical example, let us use the same values as in Section~\ref{sec:NumIns}, namely, $P_0 = 300\,\rm W$, $m = 68\,\rm kg$ and $g = 9.81\,\rm m/s^2$\,.
We also let $P_{\rm met}=330\,\rm W$ and $\kappa_0=1/500\,\rm N^{-1}$.
Following expression~(\ref{eq:P/P04}), we write
\begin{equation*}
(1.1)^4 = 1 + 3\,A^2_{\rm max} + \frac{3}{8}\,A^4_{\rm max}\,.
\end{equation*}
Solving this quartic equation, we get $A_{\rm max}=0.3896$\,, and, hence, following equation~(\ref{eq:SinMax}), we obtain $\theta_{\rm max}=16.9789^\circ$\,, which is tantamount to a grade of~$30.5328\%$\,.
The same grade would be obtained with any $P_{\rm met}/P_0$ ratio that equals to~$1.1$\,.
In other words, according to model~(\ref{eq:Pphi}), $\theta_{\rm max}$ depends on the ratio of the metabolic cost and mechanical power generated by a cyclist, which is a measure of smoothness of pedalling.
Herein, following expression~(\ref{eq:Pphi}), the maximum power for each pedal downstroke is $P=416.6494\,\rm W$; the minimum power, which correspond to the vertical crank, is $P=183.3506\,\rm W$.
The instantaneous power, average power and the metabolic cost are shown in Figure~\ref{fig:PvsPhi}.
\begin{figure}[h]
	\centering
	\includegraphics[width=0.7\textwidth]{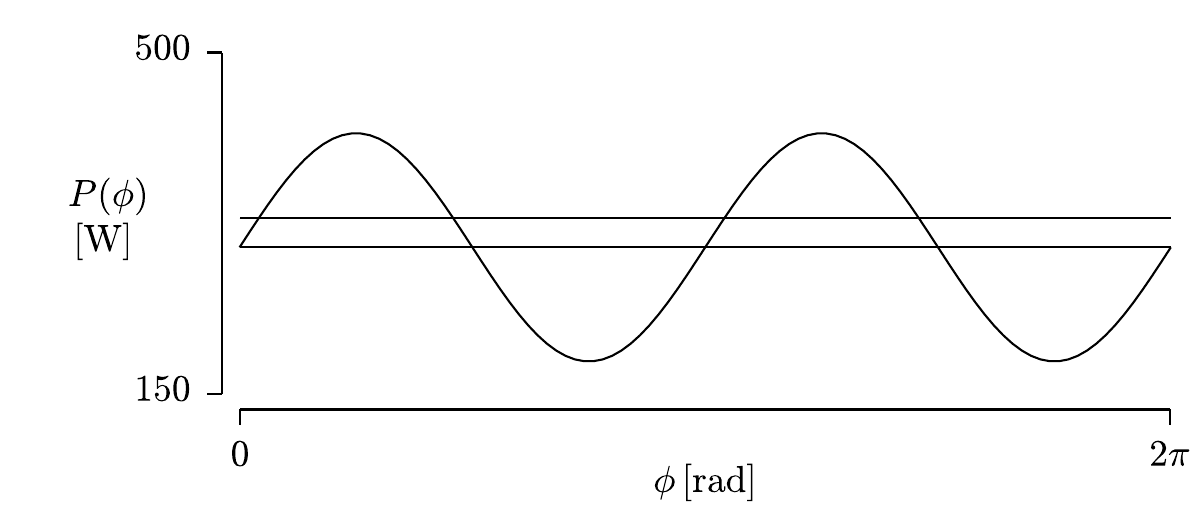}
	\caption{%
	Instantaneous power,~$P(\phi)=P_0\left(1+ 0.3896\,\sin(2\phi)\right){\rm W}$, average power,~$P_0=300{\rm W}$, and the metabolic cost,~$P_{\rm met}=330{\rm W}$, for a single pedal revolution}
	\label{fig:PvsPhi}
\end{figure}

Evaluation of these results can allow the ascent brachistochrone to be viewed as a straight line whose slope is limited by pedalling efficiency.
Empirical information to obtain $\theta_{\rm max}$ is the estimate of $P_{\rm met}/P_0$ and of $\kappa_0$\,, which depend on the cyclist and on the length of the ascent.
For further model adjustments, one might consider $\kappa_0$ as a function of~$\theta$\,.
\end{appendix}
\section*{Acknowledgements}
We wish to acknowledge insightful comments of Scott Anderson, Lucio Cadeddu, Mikhail Kochetov, Andrea Oliveri and Maurizio Vianello, as well as proofreading of David Dalton.
The research is partially supported by NSERC Discovery Grant RGPIN-2018-05158 of Michael A. Slawinski.
\end{document}